\newlength\celldim
\newlength\fontheight
\newlength\extraheight
\newcounter{sqcolumns}
\newcolumntype{S}{
 @{}
 >{\centering \rule[-0.5\extraheight]{0pt}{\fontheight + \extraheight}%
 \begin{minipage}{\celldim}\centering}
 p{\celldim}
 <{\end{minipage}} 
 @{} }
\newcolumntype{Z}{ @{} >{\centering} p{\celldim} @{} }
\newenvironment{squarecells}[1]
  {\setlength\celldim{2em}%
   \settoheight\fontheight{A}%
   \setlength\extraheight{\celldim - \fontheight}%
   \setcounter{sqcolumns}{#1 - 1}%
   \begin{tabular}{|S|*{\value{sqcolumns}}{Z|}}\hline}
  {\end{tabular}}
\newcommand\nl{\tabularnewline\hline}
\newtheorem{thm}{Theorem}[section]
\begin{document}
\title{New Lower Bounds for the Shannon Capacity of Odd Cycles}
\author{K. Ashik Mathew and Patric R. J. \"{O}sterg\r{a}rd
\thanks{K. A. Mathew and P. R. J. \"{O}sterg\r{a}rd are
with the Department of Communications and Networking, 
Aalto University School of Electrical Engineering, P.O.\ Box 13000, 
00076 Aalto, Finland (e-mail: ashik.kizhakkepallathu@aalto.fi; 
patric.ostergard@aalto.fi).}}

\maketitle

\begin{abstract}
The Shannon capacity of a graph $G$ is defined as 
$c(G)=\sup_{d\geq 1}(\alpha(G^d))^{\frac{1}{d}},$ 
where $\alpha(G)$ is the independence number of $G$. 
The Shannon capacity of the cycle $C_5$ on $5$ vertices was
determined by Lov\'{a}sz in 1979, but the Shannon capacity of a cycle $C_p$
for general odd $p$ remains one of the most notorious open problems 
in information theory. By prescribing stabilizers for the 
independent sets in $C_p^d$ and using stochastic search methods, we show that 
$\alpha(C_7^5)\geq 350$, $\alpha(C_{11}^4)\geq 748$, $\alpha(C_{13}^4)\geq 1534$ and 
$\alpha(C_{15}^3)\geq 381$. This leads to improved lower bounds on the Shannon capacity
of $C_7$ and $C_{15}$: 
$c(C_7)\geq 350^{\frac{1}{5}}> 3.2271$ and $c(C_{15})\geq 381^{\frac{1}{3}}> 7.2495$.

\end{abstract}

\section{Introduction}
\label{sec:intro}

The Shannon capacity of a graph is an important information-theoretic parameter
and plays a central role in the
study of the zero-error capacity of a noisy communication channel represented by the graph~\cite{KO98}. 
A communication channel transmitting $p$ different symbols can be represented by a graph
$G$ with vertex set $V$ and edge set $E$ in the following way: $V$ is the set of transmitted 
symbols, and for $v_1, v_2\in V$, $(v_1,v_2)\in E$ if the symbols $v_1$ and
$v_2$ are indistinguishable.
The \emph{Shannon capacity} of $G$ is defined as 
$$c(G)=\sup_{d\geq 1}(\alpha(G^d))^{\frac{1}{d}},$$
where $\alpha(G)$ is the independence number of $G$ and the graph strong product
is assumed \cite{S56}. For a survey of some of the early results 
related to the Shannon capacity of graphs, see~\cite{Knuthsand}.

Algebraic tools for the study of Shannon capacity were
proposed by Haemers~\cite{Haem78, HaemIEEE}
while the Shannon capacity of digraphs were investigated by
Alon~\cite{Alon98}.
See also ~\cite{Alon95, Ash87, Feig95,Szeg94} for some related studies. 

For a channel transmitting $p$ symbols represented by the elements of 
${\mathbb Z}_p=\{0,1,2,\ldots , p-1\}$ and
where two distinct symbols $s$ and $t$ are indistinguishable 
if $s - t \equiv \pm 1 \pmod p$,
the graph that represents the channel is $C_p$, the cycle on $p$ vertices.
If $p$ is even, then $c(C_p) = p/2$.
It was shown by Lov\'{a}sz~\cite{Lova1979} in 1979 
that $c(C_5)=\sqrt 5$,
but finding the Shannon capacity of $C_p$ for $p \geq 7$ and odd is still 
open~\cite{Bohmanlimit1}. 

It is well known that the independence number of $C_p^d$, the $d$th 
power (under strong product) of a cycle $C_p$,
is same as the number of hypercubes of side $2$ that can be packed in a discrete $d$-dimensional
torus of width $p$, denoted by $G(d,p)$~\cite{brsch-hypergraphs}.
See Figure~\ref{fig:2dcliq} for a visualization
of such a 2-dimensional packing and 
a corresponding independent set in $C_5^2$.
Representing independent sets as a packing of cubes often gives a 
more comprehensible model (visually) to work with.
This is especially true when we talk about symmetries, though
for the cause of adhering to formalism, we stick to a rather algebraic
notion to discuss symmetries in the remaining sections.

\begin{figure}[htbp]
\begin{center}
\begin{squarecells}{5}
  & 4 & 4 & 5 & 5 \nl 
3 & 4 & 4 &   & 3 \nl 
3 &   & 2 & 2 & 3 \nl 
1 & 1 & 2 & 2 &   \nl 
1 & 1 &   & 5 & 5 \nl 
\end{squarecells}
\vspace{5mm}
\end{center}
\begin{center}
\includegraphics{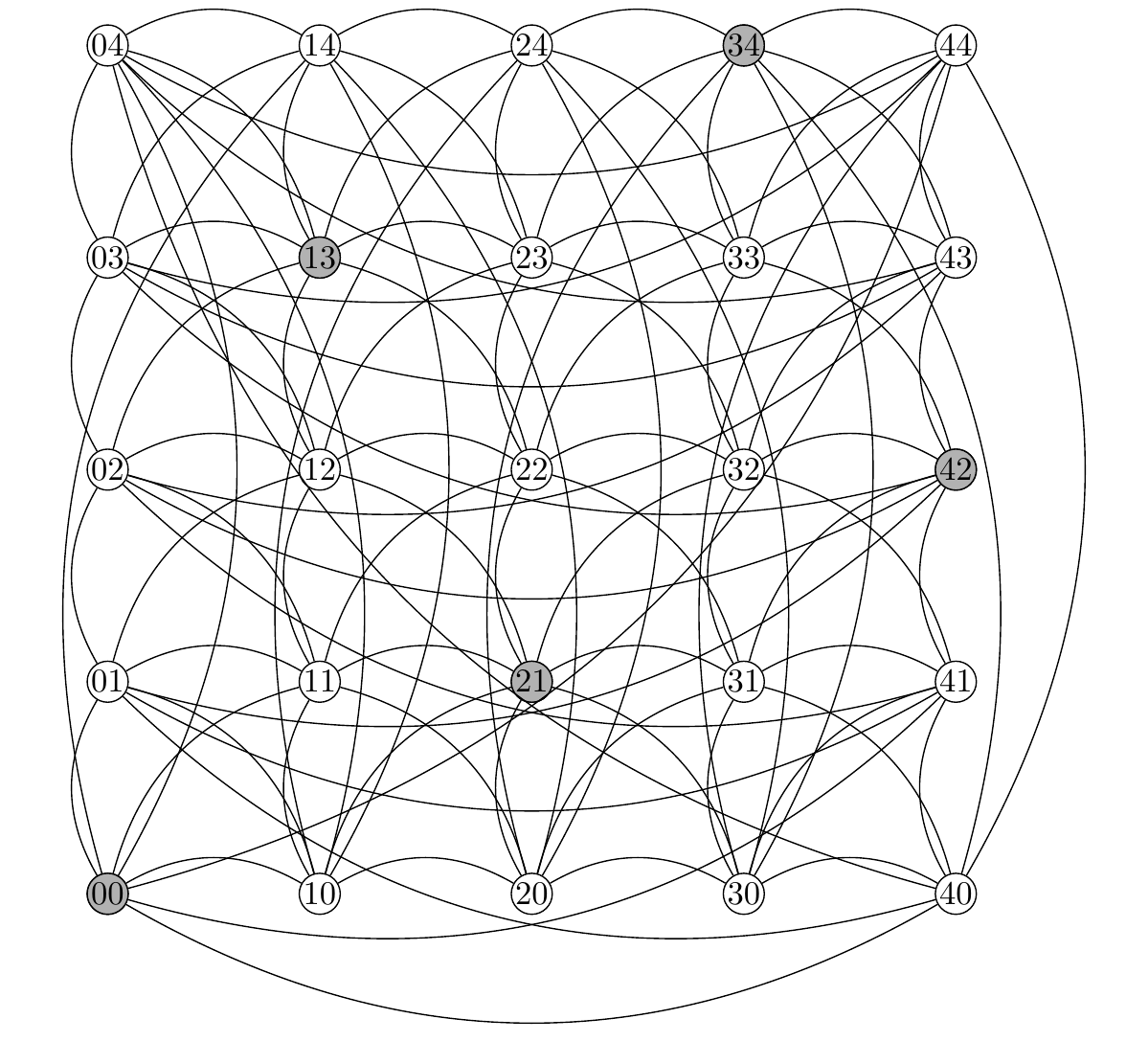}
\end{center}
\caption{Packing a torus with 2-dimensional cubes and the corresponding independent
set in $C_5^2$}
\label{fig:2dcliq}
\end{figure}

Cube packings and their different variants also form the basis of several
classical and well-studied problems in
combinatorics~\cite{brsch-hypergraphs,keller1930}. The function 
$G(d,p)$ has been studied thoroughly, and exact values and bounds
have been published in \cite{Baumertpacking} and later studies.
Several of these results have been obtained using 
exhaustive and stochastic computational methods. For example,
Baumert \textit{et al.} \cite{Baumertpacking} used exhaustive search to 
show that $G(3,7)=33$, and Vesel and \v{Z}erovnik \cite{VZ02} proved that
$G(4,7)\geq 108$ with simulated annealing. The current authors used another 
stochastic (local search) method, tabu search, to obtain lower bounds for
the capacity of triangular graphs \cite{triangular}. (Triangular graphs are
closely related to cycle graphs; the capacity problem for triangular
graphs can be studied via a generalization of the cube packing problem.)

Many of the best known cube packings possess some kind of symmetry. 
For example, the packing in Figure~\ref{fig:2dcliq} has
a symmetry generated by
$(a,b) \rightarrow (a+2,b+1)$ (addition modulo 5). 
This symmetry generates a group of order 5. Several additional 
examples can be found in the constructions of \cite{Baumertpacking}. 

In the current work, stochastic computational methods will be combined with the
idea of prescribing symmetries of packings. By prescribing symmetries, 
one is able to speed up the computer search. Obviously, such a search
has a possibility of success only if there are packings with the given
symmetries. By exploiting possible symmetries in as exhaustive manner
as possible, we are able to show that $\alpha(C_7^5)\geq 350$, 
$\alpha(C_{11}^4)\geq 748$, $\alpha(C_{13}^4)\geq 1534$ and 
$\alpha(C_{15}^3)\geq 381$. These bounds further imply that
$c(C_7)\geq 350^{\frac{1}{5}}> 3.2271$ and 
$c(C_{15})\geq 381^{\frac{1}{3}}> 7.2495$.

The paper is organized as follows. 
In Section~\ref{sec:background}, the approach of prescribing symmetries is considered,
and a stochastic local search method for finding packings is discussed in 
Section~\ref{sec:symtech}. 
In Section~\ref{sec:results}, the results are summarizes and 
tabulated. Specific packings are listed in the Appendix.

\section{Prescribing Symmetries of Independent Sets}
\label{sec:background}

The graph $G=C_p^d$ is conveniently discussed in the framework of codes.
Let $V(G)=\{0,1,\ldots, p-1\}^d$, the set of all codewords of length $d$ over 
${\mathbb Z}_p$. For $v \in V$, we denote 
$v = (v_1,v_2,\ldots ,v_d)$.  Now we define the set of edges as
\begin{equation}
\label{eq:dist}
E(G) = \{\{v,v'\} : v, v'\in V(G) \mbox{\ and\ }
\max_{1\leq i\leq d}\min\{|v_i-v'_i|,p-|v_i-v'_i|\} < 2\}.
\end{equation}
This definition further shows the one-to-one correspondence between
an independent set in $C_p^d$ and a packing in the discrete 
$d$-dimensional torus of width $p$ by (hyper)cubes of side $2$
(with their centers---or any other specific position of the cubes---in
the position given by the element of the independent set).

For small parameters, one may find the independence number of $C_p^d$ 
using Cliquer~\cite{Cliquerweb} or some other available software.
However, growing parameters makes the use of such exact algorithms
infeasible at some point. One way of handling large instances
of combinatorial search problems is to prescribe symmetries 
\cite[Chapter 9]{KO}.

Symmetries of an independent set in a graph $G$ are elements of the automorphism 
group of $G$---denoted by $\mbox{Aut}(G)$---that stabilize the
independent set. 

Let $N[v]$ denote the closed neighborhood of the vertex $v$.
A graph $G$ is called \emph{thin} if $N[u] \neq N[v]$ whenever
$u \neq v$. A \emph{prime} graph $G$ is one that cannot be written
as $G=G_1\boxtimes G_2$ (strong product of $G_1$ and $G_2$) for non-trivial graphs $G_1$ and $G_2$.

\begin{thm}[\cite{ImrBook}, Theorem 7.18]
\label{thm:imr}
For a graph $G=G_1\boxtimes G_2\boxtimes\cdots\boxtimes G_n$
where $G_1, G_2,\ldots , G_n$ are connected, thin and prime graphs,
the automorphism group of $G$ is isomorphic to the automorphism group of
 the disjoint union of graphs $G_1, G_2, \ldots, G_n$.
\end{thm}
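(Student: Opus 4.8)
The plan is to derive the theorem from the classical unique prime factorization theorem for connected graphs under the strong product (D\"orfler--Imrich; McKenzie), with thinness supplying the extra fact that the factorization can be read off from $G$ alone. Recall that a vertex of $G=G_1\boxtimes\cdots\boxtimes G_n$ is a tuple $v=(v_1,\ldots,v_n)$ and that closed neighborhoods multiply, $N[v]=N[v_1]\times\cdots\times N[v_n]$. Hence if each $G_i$ is thin then so is $G$: if $u\neq v$ they differ in some coordinate $i$, so $N[u_i]\neq N[v_i]$, and therefore the product neighborhoods differ; and $G$ is connected because each $G_i$ is. Thus $G$ is a connected thin graph to which the prime factorization theory applies.

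First I would invoke unique prime factorization: up to isomorphism and reordering, the connected prime graphs $G_1,\ldots,G_n$ are \emph{the} prime factors of $G$. Next, using thinness, I would show that the \emph{$G_i$-layers} of $G$ --- the induced subgraphs obtained by fixing every coordinate but the $i$-th --- are determined by $G$ itself, independently of any chosen coordinatization; this is exactly the role of the Cartesian skeleton, equivalently the canonical product relation, in Imrich's development. Granting this, any $\varphi\in\mathrm{Aut}(G)$ must carry layers to layers, and a layer of type $G_i$ can only be sent to a layer of type $G_j$ with $G_i\cong G_j$. Tracking the induced action on coordinates then shows that every automorphism has the form $(v_1,\ldots,v_n)\mapsto(\psi_1(v_{\pi^{-1}(1)}),\ldots,\psi_n(v_{\pi^{-1}(n)}))$ for a permutation $\pi$ of $\{1,\ldots,n\}$ preserving isomorphism type together with isomorphisms $\psi_i\colon G_{\pi^{-1}(i)}\to G_i$; conversely, every such map is an automorphism of $G$.

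Finally I would match this group with $\mathrm{Aut}(G_1\sqcup\cdots\sqcup G_n)$. Grouping the factors into isomorphism types $F_1,\ldots,F_k$ with multiplicities $m_1,\ldots,m_k$, both descriptions yield the same group, namely $\prod_{j=1}^{k}\bigl(\mathrm{Aut}(F_j)\wr S_{m_j}\bigr)$: within a block one applies an automorphism of $F_j$ to each copy and permutes the $m_j$ copies freely, and distinct blocks do not interact. The main obstacle is the middle step --- establishing that the layer structure, equivalently the prime factorization, is \emph{canonical} for thin connected graphs, so that no automorphism can blend distinct prime factors. This is precisely where thinness cannot be dropped: without it, for instance $K_4=K_2\boxtimes K_2$ has $\mathrm{Aut}(K_4)=S_4$, far larger than $\mathrm{Aut}(K_2\sqcup K_2)$. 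Everything else in the argument is bookkeeping about the resulting relabel-and-permute action.
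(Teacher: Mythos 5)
The paper gives no proof of this statement; it is imported verbatim as Theorem~7.18 of the cited \emph{Handbook of Product Graphs}, so there is no in-paper argument to compare against. Your outline is a correct reconstruction of the standard proof from that source: closed neighborhoods multiply under $\boxtimes$, so thinness of the factors passes to the product; unique prime factorization for connected thin graphs under the strong product (D\"orfler--Imrich, McKenzie) makes the layer decomposition canonical; hence every automorphism is a permutation of isomorphic prime factors composed with automorphisms of the factors, and since the $G_i$ are connected (so they are exactly the components of the disjoint union) this group coincides with $\mathrm{Aut}(G_1\sqcup\cdots\sqcup G_n)\cong\prod_{j}\bigl(\mathrm{Aut}(F_j)\wr S_{m_j}\bigr)$. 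The only caveat is that all of the real difficulty sits in the step you delegate to the classical factorization theory (canonicity of the layers for thin connected graphs), so as a self-contained proof this is a sketch; but as a derivation of the stated theorem from that machinery it is sound, and your $K_4=K_2\boxtimes K_2$ example correctly isolates why the thinness hypothesis cannot be dropped.
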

\begin{thm}[Frucht~\cite{Frucht49}]
\label{thm:fru}
If $G$ is a connected graph and $nG$ denotes the graph representing
$n$ disjoint copies of $G$, then
$\mbox{Aut}(nG)$ is the wreath product $\mbox{Aut}(G)\wr S_n$.
\end{thm}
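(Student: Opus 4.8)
The plan is to construct an explicit isomorphism $\Psi\colon \mbox{Aut}(nG)\to \mbox{Aut}(G)\wr S_n=\mbox{Aut}(G)^n\rtimes S_n$ by recording how an automorphism of $nG$ permutes and acts on the connected components. First I would fix notation: write $nG$ as $G_1\sqcup\cdots\sqcup G_n$ with each $G_j$ a labelled copy of $G$, and fix once and for all an isomorphism $\iota_j\colon G\to G_j$. The key structural observation is that, because $G$ is connected, the subgraphs $G_1,\dots,G_n$ are exactly the connected components of $nG$; since every graph automorphism carries connected components bijectively onto connected components, each $\phi\in\mbox{Aut}(nG)$ induces a well-defined permutation $\sigma=\sigma(\phi)\in S_n$ with $\phi(V(G_j))=V(G_{\sigma(j)})$ for all $j$. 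I would then set $\Psi(\phi)=\bigl((g_1,\dots,g_n),\sigma\bigr)$, where $g_j\in\mbox{Aut}(G)$ is obtained by transporting the graph isomorphism $\phi|_{G_j}\colon G_j\to G_{\sigma(j)}$ back to $G$ through the fixed identifications, i.e.\ $g_j=\iota_{\sigma(j)}^{-1}\circ(\phi|_{G_j})\circ\iota_j$ (the exact slot in which $g_j$ is recorded will be dictated by the wreath-product convention, see below).

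For bijectivity I would exhibit the inverse directly. Given $\bigl((h_1,\dots,h_n),\tau\bigr)\in\mbox{Aut}(G)\wr S_n$, define a map $\phi$ on $nG$ component by component via $\phi|_{G_j}=\iota_{\tau(j)}\circ h_j\circ\iota_j^{-1}\colon G_j\to G_{\tau(j)}$. Because $nG$ has no edges joining distinct components, a bijection of $V(nG)$ that restricts to a graph isomorphism on each component is automatically an automorphism of $nG$; hence $\phi\in\mbox{Aut}(nG)$, and it is immediate that this assignment and $\Psi$ are mutually inverse. In particular every permutation of $\{1,\dots,n\}$ is realized by some automorphism — here one uses that all $n$ components are copies of the \emph{same} graph $G$, so the ``relabelling'' maps $\iota_{\tau(j)}\circ\iota_j^{-1}$ are available.

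It remains to verify that $\Psi$ is a group homomorphism, i.e.\ that composition in $\mbox{Aut}(nG)$ corresponds to the semidirect-product multiplication of the wreath product. For $\phi,\phi'\in\mbox{Aut}(nG)$ one computes $(\phi\circ\phi')|_{G_j}$ by first applying $\phi'|_{G_j}$, which lands in $G_{\sigma(\phi')(j)}$, and then $\phi|_{G_{\sigma(\phi')(j)}}$; expanding both through the $\iota$'s, the intermediate identifications cancel, the component permutation comes out as $\sigma(\phi)\sigma(\phi')$, and the $\mbox{Aut}(G)$-coordinates combine exactly according to the action of $S_n$ on $\mbox{Aut}(G)^n$ by coordinate permutation. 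I expect this last bookkeeping to be the only real obstacle: one must commit to a single convention for $\mbox{Aut}(G)\wr S_n$ — whether $S_n$ acts on the left or the right, and whether each $\mbox{Aut}(G)$-entry is indexed by the source or the target component — and then carry it through the composition without an off-by-a-permutation slip. (If the naive indexing yields an anti-homomorphism, composing with $x\mapsto x^{-1}$, an anti-automorphism of any group, restores an honest isomorphism.) The genuinely graph-theoretic content — that automorphisms respect connected components and that connectedness of $G$ pins the components down to be precisely $G_1,\dots,G_n$ — is short, and it is exactly the point at which the hypothesis ``$G$ connected'' is used.
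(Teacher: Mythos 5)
Your proof is correct: the component-preservation argument, the explicit map $\Psi$, and the verification that the standard wreath-product multiplication $\bigl((g_i),\sigma\bigr)\bigl((h_j),\tau\bigr)=\bigl((g_{\tau(j)}h_j)_j,\sigma\tau\bigr)$ matches composition all go through (your hedge about a possible anti-homomorphism is unnecessary with the indexing you chose, but harmless). Note that the paper itself gives no proof of this statement --- it is quoted as a known result of Frucht~\cite{Frucht49} --- so there is nothing to compare against; your argument is the standard self-contained one and correctly isolates where connectedness of $G$ is used.
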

\begin{thm}
The automorphism group of $C_p^d$ (for $p>3$) is isomorphic to the wreath product 
$D_p\wr S_d$, where $D_p$ is the dihedral group of order $2p$ 
and $S_d$ is the symmetric group of degree $d$.
\end{thm}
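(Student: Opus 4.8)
The plan is to apply Theorems~\ref{thm:imr} and~\ref{thm:fru} to the strong-product decomposition $C_p^d = C_p \boxtimes C_p \boxtimes \cdots \boxtimes C_p$ ($d$ factors), so the main task is to check that the three hypotheses of Theorem~\ref{thm:imr}---connected, thin, and prime---hold for each factor $C_p$ when $p > 3$, and then to identify $\mathrm{Aut}(C_p)$ with $D_p$. Connectedness of $C_p$ is immediate. For thinness, I would observe that in $C_p$ with $p > 3$ the closed neighborhood $N[v] = \{v-1, v, v+1\}$ determines $v$ uniquely as the middle element (equivalently, two distinct vertices of $C_p$ have the same closed neighborhood only when $p \le 3$, where the graph is a single vertex, an edge, or a triangle); hence $C_p$ is thin. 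For primeness, I would argue that $C_p$ with $p > 3$ cannot be a nontrivial strong product: a strong product $G_1 \boxtimes G_2$ with both factors having at least two vertices contains a $K_4$ (take an edge in each factor), whereas $C_p$ for $p \ge 4$ is triangle-free (for $p \ge 5$) or, for $p = 4$, contains no $K_4$; more carefully, since $C_4 = K_2 \boxtimes K_2$ is \emph{not} prime, one must treat $p = 4$ separately, but even there the automorphism count works out, so I would either restrict the thinness/primeness route to $p \ge 5$ and handle $p = 4$ by a direct computation, or note that $C_4^d$'s automorphism group can be obtained from the $K_2 \boxtimes K_2$ decomposition and coincides with $D_4 \wr S_d$ as well.

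Once the hypotheses are verified, Theorem~\ref{thm:imr} gives $\mathrm{Aut}(C_p^d) \cong \mathrm{Aut}(d\,C_p)$, the automorphism group of the disjoint union of $d$ copies of $C_p$. Then Theorem~\ref{thm:fru} (Frucht) applied with $G = C_p$ and $n = d$ yields $\mathrm{Aut}(d\,C_p) \cong \mathrm{Aut}(C_p) \wr S_d$. Finally I would invoke the classical fact that $\mathrm{Aut}(C_p) \cong D_p$, the dihedral group of order $2p$ (the $p$ rotations and $p$ reflections of the cycle), which holds precisely for $p \ge 3$; combining these isomorphisms gives $\mathrm{Aut}(C_p^d) \cong D_p \wr S_d$, as claimed.

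I expect the main obstacle to be the primeness verification, and in particular the borderline case $p = 4$: since $C_4 \cong K_2 \boxtimes K_2$ is visibly \emph{not} prime, the clean chain ``thin $+$ prime $\Rightarrow$ Theorem~\ref{thm:imr}'' does not literally apply there, and one must either re-express $C_4^d$ as a strong product of $2d$ copies of $K_2$ (whose automorphism group is $S_2 \wr S_{2d}$) and then check this is still isomorphic to $D_4 \wr S_d = S_2 \wr (S_2 \wr S_d)$ via the imprimitive-action structure, or simply remark that the stated theorem is typically used only for $p \ge 5$ (which covers all the cases $C_7, C_{11}, C_{13}, C_{15}$ relevant to this paper) and give $p = 4$ as a short separate check. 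For $p \ge 5$ the argument is routine: triangle-freeness of $C_p$ kills any nontrivial strong-product decomposition (a product of two graphs each with an edge contains a 4-clique, hence a triangle), and thinness is the one-line neighborhood argument above.
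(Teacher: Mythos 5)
Your overall strategy is exactly the paper's: verify that $C_p$ is connected, thin, and prime, apply Theorem~\ref{thm:imr} to get $\mathrm{Aut}(C_p^d)\cong\mathrm{Aut}(d\,C_p)$, apply Theorem~\ref{thm:fru} to get $\mathrm{Aut}(C_p)\wr S_d$, and identify $\mathrm{Aut}(C_p)$ with $D_p$. The thinness and primeness arguments you sketch are the same one-liners the paper uses.

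However, your worry about $p=4$ rests on a confusion between the strong and Cartesian products. It is \emph{not} true that $C_4\cong K_2\boxtimes K_2$: the strong product $K_2\boxtimes K_2$ is $K_4$ (every pair of the four vertices is equal or adjacent in each coordinate), whereas $C_4$ is the \emph{Cartesian} product $K_2\,\square\,K_2$. Consequently $C_4$ \emph{is} prime with respect to the strong product, and the uniform argument you already have in hand --- a nontrivial strong product of two connected graphs, each with at least one edge, contains $K_4$ as a subgraph, while $C_p$ for $p>3$ contains no $K_4$ --- covers $p=4$ along with all larger $p$, exactly as in the paper. The separate treatment of $p=4$ you propose is therefore unnecessary, and the specific fallback of re-expressing $C_4^d$ as a strong product of $2d$ copies of $K_2$ would not work, since that decomposition does not exist. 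Dropping that digression leaves a correct proof identical in substance to the paper's.
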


\begin{proof}
We first show that $C_p$ on vertices $\{0,1,2,\ldots ,p-1\}$
and edge set $\{\{u,v\}: u-v=\pm 1\mod p\}$ 
is thin and prime for $p>3$. 
Consider any two distinct vertices $x,y\in V(C_p)$.
If $\{x,y\}\notin E(C_p)$, $x\notin N[y]$ and so $N[x]\neq N[y]$.
Suppose $\{x,y\}\in E(C_p)$. This means that (\emph{w.l.o.g.})
$x-y=1\mod p$.
Consider the vertex $z=x+1\mod p$.
Clearly, $\{x,z\}\in E(C_p)$ and $\{y,z\}\notin E(C_p)$,
 leading to $N[x]\neq N[y]$.
So, $C_p$ is thin. Now we observe that $C_p$ is prime by the
following argument.
By definition, if $G=G_1\boxtimes G_2$ is connected, both $G_1$ and
$G_2$ are connected.
Since $K_2\boxtimes K_2=K_4$, the strong product of
any two graphs with at least one edge each has $K_4$ as a subgraph.
Since $C_p$ does not have $K_4$ as a subgraph, $C_p$ is prime.
Now that $C_p$ is  connected, thin and  prime,
Theorem~\ref{thm:imr} applies to it.
Since the automorphism group of $C_p$ is the dihedral group $D_p$, 
using Theorem~\ref{thm:fru}, the result follows.
\end{proof}

The order of the group $\mbox{Aut}(C_p^d)$ is 
$|\mbox{Aut}(C_p^d)| =|D_p\wr S_d|=(2p)^dd!$.
In the framework of codes, introduced earlier, elements of $\mbox{Aut}(C_p^d)$ 
act by a permutation of the coordinates followed by permutations
of the coordinate values (separately for each coordinate) that have
the form
\begin{equation}
\label{eq:perm}
i \rightarrow ai + b\!\!\! \pmod{p},\ a \in \{-1,1\},\ b \in {\mathbb Z}_p,
\end{equation}

Given two codes corresponding to independent sets or packings of cubes,
we say that these are \emph{equivalent} if one of the codes can be obtained
from the other with a mapping in the action of $\mbox{Aut}(C_p^d)$. Such
mappings from a code onto itself---which are formally \emph{stabilizers} 
of the code (and the corresponding independent set and the packing) under
the action of $\mbox{Aut}(C_p^d)$---are
said to form the automorphism group of the code. 

The automorphism group of a code is a subgroup of $\mbox{Aut}(C_p^d)$. When prescribing
possible automorphism groups, we therefore consider subgroups of $\mbox{Aut}(C_p^d)$
up to conjugacy. Moreover, we reduce the number of
groups to consider by explicitly restricting the computations to cyclic groups. 
(In this manner we are still able to cover a large part of the groups, since 
most large groups that are omitted will have a cyclic subgroup amongst the 
groups considered.) 

Having prescribed an automorphism group of a code, the action of the group
partitions all possible codewords into orbits. In the framework of 
independent sets, we now get instances of the maximum weight independent
set problem. The vertex set consist of all \emph{admissible} orbits:
the pairs of codewords in the set must fulfill the distance criterion in
(\ref{eq:dist}). The weight of a vertex is the number of codewords in the 
orbit. Finally, edges are inserted whenever no pairs of codewords, one from each of
the orbits, violate the distance criterion in (\ref{eq:dist}).

With prescribed automorphism groups, we can extend the range of parameters
for which the running time of Cliquer (which can also handle weighted
graphs) or similar software is feasibly short. Moreover, by also changing
the computational approach from being exact to becoming stochastic, 
we can extend the range of parameters even further. Such an approach
will be discussed next.

\section{Stochastic Search for Weighted Independent Sets}
\label{sec:symtech}

The graphs obtained in the previous section are weighted. In general,
let $G$ be an arbitrary graph with vertex set $V(G)$ and edge set
$E(G)$, where each vertex has a positive integer weight. This is
obviously a generalization of the case of maximum independent sets,
which we get by letting all weights be 1. Note that since an independent
graph corresponds to a clique in the complement graph, any discussion
of independent sets apply to cliques and vice versa.
The maximum (weight) independent set problem is surveyed in \cite{BBPP99},
in the framework of cliques.

The decision problem of finding an independent set of weight at least
$k$ in a graph $G$ is NP-complete, so no polynomial-time general algorithms
are expected to be discovered. Due to the hardness (and the importance) 
of the problem, a lot of effort has been put on developing stochastic 
algorithms. For unweighted graphs and with stochastic algorithms, the main 
approach has been to process an independent set by adding and removing
vertices. Unfortunately, such a straightforward approach is not as effective
when the vertices have different weights.

Montemanni and Smith~\cite{MS09} discovered a technique for modifying
independent sets (in terms of cliques) by removing not one but many
vertices at a time. After removing a set of vertices, an exact algorithm
(like Cliquer) can be used to find a set of vertices to add that have
the largest possible weight. In some sense, this approach lies in between
basic stochastic search and exact algorithms.

The main decision to be made in the approach by Montemanni and Smith is
the set of vertices to remove from an independent set. In \cite{MS09}
vertices are removed in a random manner. When one thinks about
this problem in the context of cube packings, removal means removing
(hyper)cubes. When cubes are removed, there will be holes in the packing.
But with such holes that are not connected, we will have a situation
equivalent to that of sequentially removing a smaller number of cubes
in different parts of the packing. 

The second author \cite{O15} realized that with instances of the
maximum weight independent set problem that come from packing problems,
one may remove one vertex $v$ and all vertices that are within a certain
heuristic distance from $v$. The heuristic distance, which does not
have to be a metric, is defined separately
for each pair of vertices of a graph, and some experimenting is
typically needed to find a proper definition. The approach in \cite{O15}
has been used to find new $q$-analog packings \cite{BOW15}.

We here use the algorithm developed in \cite{O15} and define the
distance between vertices of the weighted graphs as the minimum
of 
\[
d(v,v') = \sum_{i=1}^d\min\{|v_i-v'_i|,p-|v_i-v'_i|\}.
\]
over all pairs of codewords, with one codeword from each orbit.

\section{Results}
\label{sec:results}

By applying the approaches discussed in this paper and using more than
2 CPU-years in the stochastic search, 
we have obtained independent sets that attain the following bounds for $G(d,p)$:
$G(5,7)\geq 350$, $G(4,11)\geq 748$, $G(4,13)\geq 1534$ and 
$G(3,15)\geq 381$. These also leads to improved lower bounds on the Shannon 
capacity of $C_7$ and $C_{15}$: 
$c(C_7)\geq 350^{\frac{1}{5}}> 3.2271$ and $c(C_{15})\geq 381^{\frac{1}{3}}> 7.2495$.
There previous best known lower bounds for $c(C_7)$ and $c(C_{15})$
were
$108^{\frac{1}{4}}> 3.2237$~\cite{VZ02} and $380^{\frac{1}{3}}>
7.2431$~\cite{Baumertpacking} respectively.
The best known lower bounds for $c(C_p)$ for other small cycles
of odd length are: $c(C_9)\geq 81^{\frac{1}{3}}>
4.3267$~\cite{Baumertpacking},
$c(C_{11})\geq 148^{\frac{1}{3}}> 5.2895$~\cite{Baumertpacking}
and $c(C_{13})\geq 247^{\frac{1}{3}}> 6.2743$~\cite{Bohman1}.
The Lov\'{a}sz's $\vartheta$-function
\begin{equation*}
\vartheta(p)=\frac{p\cos\frac{\pi}{p}}{1+\cos\frac{\pi}{p}}
\end{equation*}
gives
upper bounds for the Shannon capacity of the odd cycle $C_p$~\cite{Lova1979}.
Using this function,  
$c(C_7)< 3.3177$, $c(C_9)< 4.3601$,
$c(C_{11})< 5.3864$, $c(C_{13})< 6.4042$
and $c(C_{15})< 7.4172$.

The currently best known upper and lower bounds for $G(d,p)$ are listed in
Table~\ref{tab:bounds} together with keys.
Only one key is provided
in cases where the value can be obtained using more than one method.

\begin{table}[htbp]
\begin{center}
\caption{Bounds on $G(d,p)$}

\label{tab:bounds}
\begin{tabular}{cccccc}\\\hline
$p\backslash d$&1&2&3&4&5\\
\hline
$5$ & $^a2^a$ & $^a5^a$ & $^c10^f$ & $^c25^d$ & $^c50$--$55^j$\\
$7$ & $^a3^a$ &$^a10^a$ & $^f33^f$ & $^h108$--$115^d$ & $^k350$--$401^j$\\
$9$ & $^a4^a$ &$^a18^a$ & $^e81^d$ & $^c324$--$361^j$ & $^c1458$--$1575^j$\\
$11$ & $^a5^a$ &$^a27^a$ & $^e148^d$ & $^k748$--$814^d$ & $^c3996$--$4477^d$\\
$13$ & $^a6^a$ &$^a39^a$ & $^g247^i$ & $^k1534$--$1605^d$ & $^c9633$--$10432^d$\\
$15$ & $^a7^a$ &$^a52^a$ & $^k381$--$390^d$ & $^b2720$--$2925^d$ & $^c19812$--$21937^d$\\\hline
\end{tabular}
\end{center}
\end{table}

\vspace{5mm}
\noindent
Key to Table~\ref{tab:bounds}.\\
\begin{tabular}{ll}
\multicolumn{2}{@{}l}{Bounds:}\\

$^a$ & $G(1,p)=\lfloor \frac{p}{2}\rfloor , G(2,p) =
      \lfloor\frac{p^2-p}{4}\rfloor $~\cite[Theorem 2]{Baumertpacking}\\
$^b$ & $G(d,p)\geq 1 +
       G(d,p-2) \frac{p^d-2^d}{(p-2)^d}$~\cite[Corollary 2]{Baumertpacking}\\
$^c$ & $G(d,p)\geq G(d_1,p) G(d-d_1,p)$~\cite[Corollary 3]{Baumertpacking}\\
$^d$ & $G(d,p) \leq \lfloor \frac{p}{2} G(d-1,p)\rfloor$~\cite[Lemma 2]{Baumertpacking}\\
$^e$ & Baumert \textit{et al.}~\cite[Theorem 3]{Baumertpacking}\\
$^f$ & Baumert \textit{et al.}~\cite[Theorem 4]{Baumertpacking}\\
$^g$ & Baumert \textit{et al.}~\cite[Theorem 6]{Baumertpacking}\\
$^h$ & Vesel and \v{Z}erovnik \cite{VZ02}\\
$^i$ & Bohman, Holzman, and Natarajan~\cite{Bohman1}\\
$^j$ & $G(d,p) \leq
       \left [ \frac{p\cos\frac{\pi}{p}}{1+\cos\frac{\pi}{p}}\right ]^d$~\cite{Lova1979}\\
$^k$ & This paper, see Appendix\\

\end{tabular} 

\bibliographystyle{IEEETran}

\section*{Appendix}
\begin{small}

We here list codes giving the four new lower bounds. 
The permutation of coordinates is the identity
permutation in all generators of the groups, and $a=1$ for all value
permutations in
(\ref{eq:perm}). We
therefore present the groups by simply listing
the values of $b$ for the $d$ value permutations of a generator.

\subsection*{$G(5,7)\geq 350$:}
\noindent
Generator: (0, 1, 1, 5, 1)\\
Group order: 7\\
Orbit representatives: (0, 5, 6, 6, 0), (0, 0, 6, 6, 0), (3, 3, 0, 6, 0), (0, 5, 2, 1, 0), (2, 5, 6, 5, 0), (1, 3, 0, 0, 0), (2, 3, 2, 0, 0), (2, 1, 0, 4, 0), (2, 5, 2, 1, 0), (0, 2, 1, 2, 0), (4, 2, 6, 3, 0), (4, 0, 0, 3, 0), (5, 1, 2, 3, 0), (3, 6, 1, 5, 0), (4, 5, 0, 2, 0), (3, 4, 5, 2, 0), (1, 6, 1, 6, 0), (2, 3, 6, 4, 0), (5, 5, 1, 4, 0), (5, 3, 1, 3, 0), (6, 4, 0, 1, 0), (0, 0, 2, 2, 0), (6, 0, 1, 0, 0), (5, 1, 5, 0, 0), (5, 6, 6, 0, 0), (5, 3, 5, 1, 0), (0, 3, 6, 5, 0), (2, 0, 2, 1, 0), (4, 0, 3, 5, 0), (4, 4, 2, 6, 0), (4, 2, 3, 6, 0), (1, 5, 5, 3, 0), (6, 2, 4, 5, 0), (4, 4, 4, 6, 0), (6, 2, 0, 0, 0), (1, 0, 5, 4, 0), (4, 6, 4, 0, 0), (3, 1, 4, 1, 0), (3, 6, 5, 2, 0), (6, 0, 5, 4, 0), (2, 3, 3, 2, 0), (1, 1, 4, 2, 0), (1, 5, 3, 3, 0), (1, 2, 4, 4, 0), (1, 1, 1, 6, 0), (3, 1, 1, 6, 0), (0, 3, 3, 2, 0), (6, 4, 3, 4, 0), (6, 6, 3, 4, 0), (6, 5, 5, 4, 0)
\subsection*{$G(4,11)\geq 748$:}
\noindent
Generator: (1, 5, 8, 9)\\
Group order: 11\\
Orbit representatives: (9, 10, 0, 0), (7, 10, 9, 0), (5, 4, 0, 0), (5, 4, 2, 0), (2, 3, 0, 0), (7, 4, 2, 0), (7, 8, 1, 0), (1, 10, 2, 0), (2, 7, 4, 0), (0, 7, 5, 0), (6, 9, 7, 0), (6, 6, 1, 0), (8, 6, 1, 0), (8, 8, 10, 0), (4, 2, 1, 0), (5, 2, 3, 0), (6, 4, 4, 0), (5, 2, 5, 0), (3, 7, 6, 0), (2, 9, 6, 0), (4, 9, 6, 0), (1, 9, 4, 0), (1, 7, 7, 0), (5, 7, 8, 0), (6, 6, 10, 0), (3, 2, 3, 0), (8, 4, 4, 0), (3, 7, 8, 0), (10, 10, 2, 0), (0, 5, 5, 0), (9, 6, 5, 0), (4, 9, 8, 0), (4, 7, 10, 0), (0, 10, 0, 0), (7, 2, 4, 0), (7, 2, 6, 0), (7, 0, 7, 0), (6, 8, 10, 0), (0, 3, 10, 0), (8, 6, 3, 0), (10, 6, 3, 0), (3, 5, 0, 0), (1, 1, 1, 0), (0, 5, 7, 0), (2, 5, 7, 0), (2, 3, 9, 0), (1, 5, 9, 0), (3, 5, 9, 0), (3, 0, 1, 0), (4, 0, 3, 0), (2, 0, 4, 0), (3, 9, 4, 0), (4, 0, 5, 0), (6, 0, 5, 0), (9, 8, 1, 0), (10, 1, 8, 0), (8, 1, 9, 0), (1, 1, 10, 0), (10, 1, 10, 0), (0, 8, 2, 0), (9, 8, 3, 0), (9, 1, 6, 0), (10, 3, 6, 0), (8, 4, 6, 0), (5, 0, 7, 0), (1, 3, 7, 0), (10, 3, 8, 0), (9, 10, 9, 0) 
\subsection*{$G(4,13)\geq 1534$:}
\noindent
Generator: (0, 1, 0, 2)\\
Group order: 13\\
Orbit representatives: (9, 6, 7, 0), (9, 8, 9, 0), (7, 1, 8, 0), (0, 7, 6, 0), (8, 10, 8, 0), (7, 11, 0, 0), (4, 11, 11, 0), (5, 4, 2, 0), (8, 12, 9, 0), (3, 7, 10, 0), (5, 7, 10, 0), (5, 0, 1, 0), (2, 0, 12, 0), (1, 7, 8, 0), (1, 7, 10, 0), (0, 7, 4, 0), (12, 0, 9, 0), (5, 2, 2, 0), (11, 2, 7, 0), (11, 0, 5, 0), (10, 4, 5, 0), (1, 5, 5, 0), (3, 8, 2, 0), (4, 0, 12, 0), (11, 0, 7, 0), (5, 12, 5, 0), (3, 5, 7, 0), (5, 12, 7, 0), (7, 12, 7, 0), (8, 12, 11, 0), (2, 12, 3, 0), (2, 3, 4, 0), (4, 10, 4, 0), (7, 2, 1, 0), (0, 9, 5, 0), (0, 9, 7, 0), (1, 5, 7, 0), (10, 4, 7, 0), (1, 1, 2, 0), (6, 10, 4, 0), (6, 10, 6, 0), (12, 11, 5, 0), (12, 11, 7, 0), (11, 2, 9, 0), (11, 4, 9, 0), (6, 3, 8, 0), (6, 5, 9, 0), (7, 1, 10, 0), (7, 3, 10, 0), (5, 9, 10, 0), (9, 10, 10, 0), (6, 5, 11, 0), (10, 1, 3, 0), (4, 5, 9, 0), (0, 11, 9, 0), (2, 11, 11, 0), (7, 4, 2, 0), (4, 6, 3, 0), (6, 6, 3, 0), (9, 12, 3, 0), (0, 0, 0, 0), (8, 1, 12, 0), (6, 7, 12, 0), (6, 9, 12, 0), (9, 10, 12, 0), (8, 8, 3, 0), (6, 8, 4, 0), (8, 10, 4, 0), (11, 2, 5, 0), (7, 6, 5, 0), (9, 6, 5, 0), (9, 8, 5, 0), (7, 8, 6, 0), (8, 10, 6, 0), (9, 8, 7, 0), (12, 1, 2, 0), (1, 3, 2, 0), (0, 12, 2, 0), (3, 9, 11, 0), (2, 3, 7, 0), (4, 3, 8, 0), (2, 9, 9, 0), (2, 11, 9, 0), (7, 0, 1, 0), (9, 1, 1, 0), (9, 12, 1, 0), (4, 8, 4, 0), (5, 1, 8, 0), (11, 0, 0, 0), (11, 11, 0, 0), (9, 10, 1, 0), (11, 12, 2, 0), (10, 3, 3, 0), (12, 10, 3, 0), (0, 2, 0, 0), (11, 2, 0, 0), (0, 9, 9, 0), (12, 2, 11, 0), (11, 4, 11, 0), (1, 9, 11, 0), (0, 11, 11, 0), (2, 2, 0, 0), (4, 2, 0, 0), (2, 4, 0, 0), (4, 4, 0, 0), (0, 11, 0, 0), (3, 6, 1, 0), (3, 10, 2, 0), (2, 1, 4, 0), (3, 12, 5, 0), (2, 1, 6, 0), (5, 1, 6, 0), (4, 3, 6, 0), (3, 7, 8, 0), (10, 6, 9, 0), (12, 0, 11, 0), (10, 6, 11, 0), (10, 8, 11, 0)
\subsection*{$G(3,15)\geq 381$:}
\noindent
Generator: (5, 0, 10)\\
Group order: 3\\
Orbit representatives: (1, 10, 4), (1, 11, 0), (10, 10, 0), (1, 2, 2), (13, 3, 2), (9, 11, 2), (2, 8, 4), (7, 11, 2), (1, 0, 2), (3, 11, 0), (5, 11, 1), (1, 10, 2), (3, 10, 4), (0, 12, 2), (12, 10, 1), (14, 10, 1), (10, 9, 2), (9, 7, 3), (8, 9, 1), (10, 9, 4), (7, 11, 4), (9, 11, 4), (7, 7, 1), (7, 7, 3), (8, 5, 3), (11, 12, 1), (13, 12, 1), (14, 0, 0), (6, 9, 1), (3, 11, 2), (6, 9, 3), (4, 8, 4), (6, 5, 0), (2, 4, 3), (1, 0, 0), (12, 1, 1), (1, 6, 1), (5, 7, 2), (2, 8, 2), (4, 9, 2), (1, 6, 3), (3, 6, 4), (4, 4, 2), (6, 5, 2), (12, 14, 1), (11, 12, 3), (13, 12, 3), (12, 5, 4), (4, 5, 0), (5, 7, 0), (2, 9, 0), (4, 9, 0), (3, 6, 2), (11, 7, 4), (11, 7, 2), (12, 6, 0), (11, 8, 0), (14, 6, 1), (0, 8, 1), (8, 9, 3), (9, 3, 3), (11, 3, 3), (4, 4, 4), (6, 4, 4), (3, 7, 0), (13, 8, 0), (13, 8, 2), (14, 6, 3), (0, 8, 3), (12, 10, 3), (14, 10, 3), (3, 2, 3), (5, 3, 0), (7, 3, 0), (10, 1, 1), (5, 2, 4), (10, 14, 3), (8, 13, 4), (1, 2, 0), (14, 2, 0), (13, 4, 0), (0, 4, 1), (2, 4, 1), (0, 4, 3), (5, 0, 1), (14, 14, 2), (12, 14, 3), (0, 12, 4), (6, 13, 2), (5, 0, 3), (5, 11, 3), (6, 13, 4), (9, 3, 1), (11, 3, 1), (8, 5, 1), (9, 7, 1), (10, 5, 2), (12, 5, 2), (10, 5, 4), (5, 6, 4), (9, 12, 0), (10, 14, 1), (8, 13, 2), (3, 0, 3), (4, 13, 3), (0, 13, 0), (2, 13, 0), (4, 13, 1), (5, 2, 2), (7, 2, 2), (9, 1, 3), (7, 0, 4), (7, 2, 4), (3, 0, 1), (3, 2, 1), (14, 1, 2), (12, 1, 3), (1, 1, 4), (14, 1, 4), (13, 3, 4), (8, 1, 0), (8, 14, 0), (7, 0, 2), (2, 13, 2), (2, 12, 4), (1, 14, 4), (14, 14, 4)
\end{small}

\end{document}